\documentclass[10pt, conference]{IEEEtran}

\usepackage[mathscr]{eucal}
\usepackage[cmex10]{amsmath}
\usepackage{epsfig,epsf,psfrag}
\usepackage{amssymb,amsmath,amsthm,amsfonts,latexsym}
\usepackage{amsmath,graphicx,bm,xcolor,url}
\usepackage[caption=false]{subfig} 
\usepackage{fixltx2e}%ordering of single and double column floats
\usepackage{array}%array and tabular environments
\usepackage{verbatim}
\usepackage{bm}
\usepackage{algorithmic, cite}
\usepackage{algorithm}
\usepackage{verbatim}
\usepackage{textcomp}
\usepackage{mathrsfs}
\usepackage{epstopdf}
\newcommand{\hatcalX}{\hat{\calX}}

%% To produce a tilde in url
\catcode`~=11 \def\UrlSpecials{\do\~{\kern -.15em\lower .7ex\hbox{~}\kern .04em}} \catcode`~=13 

\allowdisplaybreaks[4]
 
\newcommand{\nn}{\nonumber}

% Calligraphic stuff
\newcommand{\calA}{\mathcal{A}}
\newcommand{\calB}{\mathcal{B}}
\newcommand{\calC}{\mathcal{C}}

\newcommand{\calE}{\mathcal{E}}

\newcommand{\calM}{\mathcal{M}}

\newcommand{\calP}{\mathcal{P}}

\newcommand{\calT}{\mathcal{T}}
\newcommand{\calU}{\mathcal{U}}

\newcommand{\calX}{\mathcal{X}}

% Boldface stuff

% Numbers bb font

\newcommand{\bbE}{\mathbb{E}}

\newcommand{\bbN}{\mathbb{N}}

\newcommand{\bbP}{\mathbb{P}}

\newcommand{\bbR}{\mathbb{R}}

% Mathfrak font

% Mathscr

% San serif font
\DeclareMathAlphabet{\mathbsf}{OT1}{cmss}{bx}{n}
\DeclareMathAlphabet{\mathssf}{OT1}{cmss}{m}{sl}% slanted sans serif

% define some useful uppercase Greek letters in regular and bold sf
\DeclareSymbolFont{bsfletters}{OT1}{cmss}{bx}{n}  
\DeclareSymbolFont{ssfletters}{OT1}{cmss}{m}{n}
\DeclareMathSymbol{\bsfGamma}{0}{bsfletters}{'000}
\DeclareMathSymbol{\ssfGamma}{0}{ssfletters}{'000}
\DeclareMathSymbol{\bsfDelta}{0}{bsfletters}{'001}
\DeclareMathSymbol{\ssfDelta}{0}{ssfletters}{'001}
\DeclareMathSymbol{\bsfTheta}{0}{bsfletters}{'002}
\DeclareMathSymbol{\ssfTheta}{0}{ssfletters}{'002}
\DeclareMathSymbol{\bsfLambda}{0}{bsfletters}{'003}
\DeclareMathSymbol{\ssfLambda}{0}{ssfletters}{'003}
\DeclareMathSymbol{\bsfXi}{0}{bsfletters}{'004}
\DeclareMathSymbol{\ssfXi}{0}{ssfletters}{'004}
\DeclareMathSymbol{\bsfPi}{0}{bsfletters}{'005}
\DeclareMathSymbol{\ssfPi}{0}{ssfletters}{'005}
\DeclareMathSymbol{\bsfSigma}{0}{bsfletters}{'006}
\DeclareMathSymbol{\ssfSigma}{0}{ssfletters}{'006}
\DeclareMathSymbol{\bsfUpsilon}{0}{bsfletters}{'007}
\DeclareMathSymbol{\ssfUpsilon}{0}{ssfletters}{'007}
\DeclareMathSymbol{\bsfPhi}{0}{bsfletters}{'010}
\DeclareMathSymbol{\ssfPhi}{0}{ssfletters}{'010}
\DeclareMathSymbol{\bsfPsi}{0}{bsfletters}{'011}
\DeclareMathSymbol{\ssfPsi}{0}{ssfletters}{'011}
\DeclareMathSymbol{\bsfOmega}{0}{bsfletters}{'012}
\DeclareMathSymbol{\ssfOmega}{0}{ssfletters}{'012}

% Hat and Tilde

\newcommand{\hatx}{\hat{x}}
\newcommand{\hatX}{\hat{X}}

\newcommand{\tily}{\tilde{y}}
\newcommand{\tilY}{\tilde{Y}}

% Bold greek

\newcommand{\hsigma}{\hat{\sigma}}

% functional dot

\def\fndot{\, \cdot \,}

% exponential

% iid

% convergence

% Inequalities

\DeclareMathOperator*{\argmin}{arg\,min}

\DeclareMathOperator{\var}{\mathsf{Var}}
\DeclareMathOperator{\Vol}{Vol}

%binary entropy

%\newcommand{\rank}{\mathsf{rank}}

\newtheorem{theorem}{Theorem} 
\newtheorem{lemma}[theorem]{Lemma}

\newtheorem{definition}{Definition}

\title{Moderate-Deviations of Lossy Source Coding for Discrete    and Gaussian    Sources}

\IEEEoverridecommandlockouts

\author{\IEEEauthorblockN{Vincent Y. F. Tan }

\vspace{0.5em}

\IEEEauthorblockA{Institute for Infocomm Research, A*STAR   (Email:
\url{tanyfv@i2r.a-star.edu.sg})}
\IEEEauthorblockA{Department  of Electrical and Computer Engineering, National University of Singapore}
}

\begin{document}

\maketitle

\begin{abstract}
We study the moderate-deviations (MD) setting for lossy source coding of stationary  memoryless sources. More specifically, we derive fundamental compression limits of source codes whose rates are $R(D)\pm\epsilon_n$, where $R(D)$ is the rate-distortion function and $\epsilon_n$ is a   sequence that   dominates $\sqrt{1/n}$. This MD setting is complementary to the  large-deviations and central limit settings and was studied by Altug and Wagner for the channel coding setting. We show, for finite alphabet and Gaussian sources, that as in the central limit-type results, the so-called dispersion  for lossy source coding plays a fundamental role in the MD setting for the lossy source coding problem. 
\end{abstract}

\begin{keywords}
Moderate-deviations, rate-distortion, dispersion.
\end{keywords}

\section{Introduction}\label{sec:intro}
Rate-distortion theory~\cite{Sha59} consists in finding the optimal compression rate for a source $X\sim P$ subject  to the condition that there exists a code which can reproduce the source to within a   distortion level $D$. The optimal compression rate for the distortion level $D$ is known as the {\em rate-distortion function} $R(P,D)$. This function can be expressed as the minimization of mutual information over test channels~\cite{Sha59}.

It is also of interest to study the  excess distortion probability for codes at rate $R>R(P,D)$. This is the probability that the average distortion between $X^n$ and its reconstruction $\hatX^n$ exceeds $D$. The  exact exponential rate of decay of this probability was derived by Marton~\cite{Marton74} for discrete memoryless sources (DMSs). This was     extended to Gaussian~\cite{Ihara00} and general sources~\cite{Iriyama05}. These results belong to the theory of {\em large-deviations} (LD) and are reviewed in Section~\ref{sec:system}.

With the   revival of interest in second-order coding rates and dispersion analysis~\cite{PPV10, Hayashi09, TK12}, various researchers have also studied the fundamental limit of lossy compression subject to the condition that the probability of excess distortion is no larger than  $\epsilon>0$. In particular,  it was shown in~\cite{ingber11} and independently in~\cite{Kos11,Kos11b}    that 
\begin{equation}
R(n,D,\epsilon)\approx R(P,D)+\sqrt{\frac{V(P,D)}{n} } Q^{-1}(\epsilon) , \label{eqn:finite}
\end{equation}
where $R(n,D,\epsilon)$ is the optimal rate of compression of a  memoryless source  at blocklength $n$ and $V(P,D)$ is  known as the {\em dispersion} of the source. Eq.~\eqref{eqn:finite} holds true  for both discrete and Gaussian sources and belongs to the realm of {\em central limit theorem} (CLT)-style results. 

In this paper, we operate in a moderate-deviations (MD) regime  \cite[Section 3.7]{Dembo} that ``interpolates  between'' the LD   and CLT regimes. In particular, we   study the performance of source codes of rates $R_n = R(P,D)\pm\epsilon_n$ where $\epsilon_n$ is a  sequence that is asymptotically larger than $\sqrt{1/n}$ (cf.\ \eqref{eqn:finite}). Our results apply to both   finite alphabet and Gaussian sources but do not reduce to  the LD  or CLT  settings. Moreover, neither the LD nor CLT results specialize to our setting.  We show that the dispersion  $V(P,D)$ also plays a fundamental role in this MD setting. Besides studying the excess distortion probability, we also study the complementary probability (also termed the probability of correct decoding) for codes whose rates are {\em below} the rate-distortion function. Similarly,  the fundamental nature of the dispersion is revealed.

This work is inspired by the work on  MD in the context of channel coding~\cite{altug10,PV10}. It was shown  in~\cite{altug10} that for {\em positive} discrete memoryless channels (i.e., $W(y|x)>0$ for all $x,y$), the dispersion also governs the ``MD exponent''  
\begin{equation}
\lim_{n\to\infty}\frac{1}{n \epsilon_n^2 } \log  e(f_n,\varphi_n, W)  = -\frac{1}{2V(W)}  . \label{eqn:channel_coding_md}
\end{equation} 
The direct part was proved by   considering the Taylor expansion of Gallager's random coding exponent. We also use this proof strategy. In~\cite{PV10}, several assumptions in~\cite{altug10} were relaxed and the relations between the MD and CLT were clarified. Concurrent to this work, Sason~\cite{Sason11} studied MD for binary hypothesis testing. Finally, we mention that He et al.~\cite{He09} studied the redundancy of the Slepian-Wolf problem which is also related to~\cite{ingber11, Kos11, Kos11b} and  to the current problem. 
\section{System Model and Basic Definitions}\label{sec:system}
Let $\calP(\calX)$ be the set of  probability mass functions  supported on the finite alphabet $\calX$. Let $\calP_n(\calX) \subset\calP(\calX)$ be the set of $n$-types. For a type $Q\in\calP_n(\calX)$, let $\calT_Q^n$ be the set of sequences $x^n$ of type $Q$, i.e., the type class.  The reproduction alphabet is denoted as $\hatcalX$. In addition, let $d:\calX\times\hatcalX \to \bbR^+$ be a   distortion measure such that for every $x\in\calX$, there exists an $\hatx_0 \in\hatcalX$ for which $d(x,\hatx_0 )=0$. The average   distortion is $d(x^n,\hatx^n):=\frac{1}{n}\sum_{i=1}^n d(x_i,\hatx_i)$.  For a function $f:\calA\to\calB$, the notation $\|f\|:=|f(\calA)|$ denotes the cardinality of its range.

A DMS $X^n\sim \prod_{i=1}^n P(x_i)$  is described at rate $R$ by an encoder. The decoder receives the description index over a noiseless link and generates a reconstruction sequence $\hatX^n\in\hatcalX^n$. We now remind the reader of  the rate-distortion problem.
\begin{definition} 
A {\em rate-distortion code} consists of (i) an encoder $f_n:\calX^n\to \calM_n$ and (ii)   a decoder  $\varphi_n:\calM_n \to \hatcalX^n$. The {\em rate} of the code is $R_n:=\frac{1}{n}\log |\calM_n |$. 
\end{definition} 
The {\em rate-distortion function} $R(P,D)$ is defined as the infimum of all numbers  $R$ for which there exists codes $\{(f_n, \varphi_n)\}_{n\in\bbN}$  for which the    {\em probability of excess distortion} 
\begin{equation}
e(  f_n,\varphi_n, P, D):=\bbP(d(X^n,\varphi_n(f_n(X^n))) > D ) 
\end{equation}
is arbitrarily small for sufficiently large blocklengths $n$.  The rate-distortion function~\cite{Sha59}  can be expressed as
\begin{equation}
R(P,D)=\min_{W : \bbE  [ d(X,\hatX)]\le D} I(P,W),
\end{equation}
where $\bbE [d(X,\hatX)]:= \sum_{x, \hatx} P(x)  W(\hatx|x) d(x,\hatx)$. Another fundamental quantity introduced by Ingber and Kochman~\cite{ingber11} is the {\em dispersion for lossy source coding}
\begin{equation}
V(P,D) :=\var_X [R'(X;P,D)], \label{eqn:defV}
\end{equation}
where $R'(x;P,D)= \frac{\partial}{\partial P(x)} R(P,D)$ for $x\in\calX$ is the partial derivative of the rate-distortion function w.r.t.\ $P(x)$ (assuming it exists).  In \eqref{eqn:defV}, the variance is taken  w.r.t.\  the distribution $P$ and $R'(X;P,D)$ is a function of the random variable~$X$. In fact, the term {\em dispersion} is usually an operational one but since it was shown in~\cite{ingber11} that the operational defintion coincides with the one in \eqref{eqn:defV}, we will abuse terminology and use the generic term dispersion for both quantities. %  which is distributed as $P$. 

We analyze $e(  f_n,\varphi_n, P, D)$ in the so-called MD regime  where the rate of the code $R_n:=\frac{1}{n}\log \|f_n\|  =  R(P,D)+\epsilon_n$ for some sequence $\epsilon_n$. Clearly, if $\epsilon_n\to 0$, then $R_n\to R(P,D)$. When the rate of the code $R$ is a constant strictly above  $R(P, D)$, Marton \cite{Marton74} showed that 
\begin{equation}
\lim_{n\to\infty}\frac{1}{n}\log e(  f_n,\varphi_n, P, D)=-F(P, R, D), \label{eqn:Fdef}
\end{equation}
where {\em Marton's exponent} is defined as 
\begin{equation}
F(P,R,D):=\min_{Q\in\calP(\calX):R(Q,D)\ge R} D(Q \, ||\, P). \label{eqn:marton}
\end{equation}
The exponent is positive for   $R>R(P,D)$. One can also consider  the {\em probability of correct decoding} $1-e(  f_n,\varphi_n, P, D)$. In~\cite[pp.~156]{Csi97}, it was shown that:
\begin{equation}
\lim_{n\to\infty} \frac{1}{n} \log \, (1-e(  f_n,\varphi_n, P, D)) = -G(P,R,D),\label{eqn:iri}
\end{equation}
where the {\em exponent for correct decoding} is 
\begin{equation}
G(P,R,D) :=\min_{Q\in\calP(\calX):R(Q,D)\le R} D(Q \, ||\, P).\label{eqn:iri_exp}
\end{equation}
The exponent is positive   for $R<R(P,D)$.  These limits and exponents   are Sanov-like LD results~\cite{Dembo}. We present MD versions of Marton's and Iriyama's  results where the normalizations in~\eqref{eqn:Fdef} and~\eqref{eqn:iri} need not be $\frac{1}{n}$. 
\section{Discrete Memoryless Sources (DMS)}
Our main result for a DMS with bounded distortion measure (i.e.  $d:\calX\times\hatcalX \to [0,d_{\max}]$) is stated as follows:
\begin{theorem} \label{thm:mdp} 
Let $\epsilon_n$ be any positive sequence satisfying
\begin{equation}
\lim_{n\to\infty}\epsilon_n = 0,\qquad \lim_{n\to\infty} \frac{n \epsilon_n^2 }{\log n }= \infty.  \label{eqn:seq}
\end{equation}
That is, $\epsilon_n =\omega( (\frac{\log n}{n})^{1/2})\cap o(1)$. Assume that $R(Q,D)$ is twice differentiable w.r.t.\  $Q$ in a   neighborhood of $P$ and   $V(P,D)>0$. There exists a rate-distortion code $\{(f_n,\varphi_n)\}_{n\in\bbN}$ with rates $\frac{1}{n}\log \|f_n\| \le R(P,D)+\epsilon_n$  such that 
\begin{equation}
\limsup_{n\to\infty}\frac{1}{n\epsilon_n^2} \log  e(  f_n,\varphi_n, P, D)\le - \frac{1}{2 V(P,D)}. \label{eqn:direct}
\end{equation}
Furthermore, every rate-distortion  code   $\{(f_n,\varphi_n)\}_{n\in\bbN}$ with rates $\frac{1}{n}\log \|f_n\| \le R(P,D)+\epsilon_n$     must satisfy
\begin{equation}
\liminf_{n\to\infty}\frac{1}{n\epsilon_n^2} \log  e(  f_n,\varphi_n, P, D)\ge - \frac{1}{2 V(P,D)}. \label{eqn:converse}
\end{equation}
\end{theorem}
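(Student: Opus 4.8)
The plan is to reduce both bounds to a Sanov-type large-deviations estimate for the probability that the empirical type of the source lands in a ``bad'' set, and then to extract the moderate-deviations rate by Taylor-expanding Marton's exponent $F(P,R,D)$ of \eqref{eqn:marton} as the rate approaches $R(P,D)$. The guiding heuristic is that for a well-designed code the excess-distortion event is essentially $\{R(\hat{P}_{X^n},D)>R_n\}$, where $\hat{P}_{X^n}$ is the type of $X^n$ and $R_n=R(P,D)+\epsilon_n$; by \eqref{eqn:Fdef} this event has probability $\exp(-nF(P,R_n,D)+o(n\epsilon_n^2))$, so everything hinges on establishing that $F(P,R(P,D)+\delta,D)=\frac{\delta^2}{2V(P,D)}(1+o(1))$ as $\delta\to 0$.

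For the direct part I would invoke the type-covering lemma: for each type $Q\in\calP_n(\calX)$ there is a set of at most $\exp(n(R(Q,D)+\delta_n))$ reconstruction sequences that $D$-covers $\calT_Q^n$, with $\delta_n=O(\log n/n)$. Building the encoder from the union of these coverings over all types with $R(Q,D)\le R_n-\zeta_n$, where $\zeta_n=\delta_n+|\calX|\log(n+1)/n=O(\log n/n)$ absorbs both the covering overhead and the polynomial number of types, yields a code of rate at most $R_n$. Every covered sequence has distortion at most $D$, so $e(f_n,\varphi_n,P,D)\le\bbP(R(\hat{P}_{X^n},D)>R_n-\zeta_n)$, which the method of types bounds above by $(n+1)^{|\calX|}\exp(-nF(P,R_n-\zeta_n,D))$. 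Since $\zeta_n=o(\epsilon_n)$ and the polynomial prefactor contributes $O(\log n)/(n\epsilon_n^2)\to 0$ (this is precisely where the hypothesis $n\epsilon_n^2/\log n\to\infty$ enters), \eqref{eqn:direct} follows once the expansion of $F$ is in place.

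For the converse I would use the matching covering lower bound: any codebook of $\exp(nR_n)$ words $D$-covers at most an $\exp(n(R_n-R(Q,D))+O(\log n))$ fraction of $\calT_Q^n$. Choosing a slack $\theta_n$ with $\log n/n\ll\theta_n\ll\epsilon_n$, which is possible by the growth condition \eqref{eqn:seq}, every type with $R(Q,D)>R_n+\theta_n$ is then all but an exponentially small fraction uncovered, whence $e(f_n,\varphi_n,P,D)\ge(1-o(1))\,\bbP(R(\hat{P}_{X^n},D)>R_n+\theta_n)$. The lower bound $\bbP(\hat{P}_{X^n}=Q)\ge(n+1)^{-|\calX|}\exp(-nD(Q\,||\,P))$, together with a type approximation of the optimizing $Q$ that keeps it feasible, gives $e\ge\exp(-nF(P,R_n+\theta_n,D)+o(n\epsilon_n^2))$, which produces \eqref{eqn:converse}.

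The analytic heart, and the step I expect to be most delicate, is the expansion of $F(P,R(P,D)+\delta,D)$. Writing $Q=P+\Delta$ with $\sum_x\Delta(x)=0$, the twice-differentiability hypothesis lets me linearize the constraint as $\sum_x R'(x;P,D)\Delta(x)\ge\delta+O(\|\Delta\|^2)$ and expand the objective as $D(Q\,||\,P)=\tfrac12\sum_x\Delta(x)^2/P(x)+O(\|\Delta\|^3)$. Solving the resulting equality-constrained quadratic program by Lagrange multipliers gives the minimizer $\Delta(x)=\lambda P(x)(R'(x;P,D)-\bbE_P[R'(X;P,D)])$ with $\lambda=\delta/V(P,D)$ and optimal value $\tfrac12\lambda^2 V(P,D)=\delta^2/(2V(P,D))$; note that $V(P,D)=\var_P[R'(X;P,D)]$ is exactly the quadratic form produced by combining the two linear constraints, and the assumption $V(P,D)>0$ is what makes the program nondegenerate and $\lambda$ finite. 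The difficulty is rigor rather than computation: I must verify that the minimizer genuinely approaches $P$ (so that it stays in the differentiability neighborhood and the constraint is active), that $\|\Delta\|=\Theta(\delta)$ so the neglected cubic terms are $o(\delta^2)$ uniformly, and that discretizing to $n$-types perturbs the exponent by only $o(\epsilon_n^2)$. Granting these, substituting $\delta=\epsilon_n(1+o(1))$ into $F=\delta^2/(2V(P,D))(1+o(1))$ and dividing by $n\epsilon_n^2$ yields the common limit $-1/(2V(P,D))$ in both \eqref{eqn:direct} and \eqref{eqn:converse}.
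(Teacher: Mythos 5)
Your proposal is correct and follows essentially the same route as the paper: Berger's type-covering lemma plus a Sanov-type bound over types for achievability, the covering-converse of Zhang et al.\ plus a dominant-type lower bound $P^n(\calT_Q^n)\ge (n+1)^{-|\calX|}\exp(-nD(Q\,\|\,P))$ for the converse, with the analytic core being the expansion $F(P,R(P,D)+\delta,D)=\frac{\delta^2}{2V(P,D)}(1+o(1))$, which the paper states as its Lemma~\ref{lemma:dev} (citing an external result) and which you derive inline via the same linearized-constraint quadratic program. The two delicate points you explicitly flag --- that the minimizer of Marton's exponent stays in the differentiability neighborhood of $P$, and that discretizing to $n$-types perturbs the exponent only multiplicatively by $1+o(1)$ --- are exactly what the paper's $\ell_1$-ball constraint in the codebook construction and its continuity Lemma~\ref{lem:cont} are there to handle, so your outline matches the paper's proof in substance.
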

Though somewhat ungainly, the log factor in~\eqref{eqn:seq} appears to be essential because the proof hinges on the method of types. So our analysis does not completely close the gap between the CLT and LD regimes.  This log factor is unnecessary in the Gaussian case as will be seen in Theorems~\ref{thm:mdp_gauss} and~\ref{thm:mdp_less}. Theorem~\ref{thm:mdp} means that if the dispersion $V(P,D)$ is small, the ``MD  exponent'' $(2V(P,D))^{-1}$ is large, corresponding to a faster decay in the excess distortion probability. This has the same interpretation as in the CLT regime~\eqref{eqn:finite}. As an example, for the Bernoulli source with Hamming distortion,  the dispersion can be computed as
\begin{equation}
V( \mathrm{Bern}(\alpha), D) = \alpha(1-\alpha) \log^2 \left(\frac{1-\alpha}{\alpha}\right) .
\end{equation}
The parameter that maximizes (resp.\ minimizes)   $V(P,D)$ is $\alpha \approx 0.0832$ (resp.\ $\alpha =0,0.5$).  Thus, the ``MD exponent'' is maximized when the source is deterministic or has maximum entropy. The proof uses the following  lemma, whose proof is essentially identical to that of~\cite[Theorem 8]{Tan11_IT}, where the divergence and the constraint set in~\eqref{eqn:marton} are approximated by a quadratic and an affine subspace  respectively.
\begin{lemma}\label{lemma:dev}
If the limit exists, Marton's exponent   satisfies
\begin{equation}
\lim_{\delta\to 0}\, \frac{F(P, R(P,D)+\delta, D) }{\delta^2} = \frac{1}{2 V(P,D)}. \label{eqn:taylor_marton}
\end{equation}
%where  $V(P,D)$ is the dispersion  defined in \eqref{eqn:defV}. 
\end{lemma}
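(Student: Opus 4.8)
The plan is to treat the constrained minimization defining $F(P,R(P,D)+\delta,D)$ as a small perturbation of the trivial problem at $\delta=0$, whose unique minimizer is $Q=P$ with optimal value $0$. First I would observe that as $\delta\to0$ the feasible set $\{Q:R(Q,D)\ge R(P,D)+\delta\}$ shrinks toward the boundary surface $\{R(Q,D)=R(P,D)\}$ through $P$, so any minimizer $Q_\delta$ must satisfy $Q_\delta\to P$. It therefore suffices to analyze the program locally, writing $Q=P+\Delta$ with $\sum_{x\in\calX}\Delta(x)=0$ (since both $P$ and $Q$ are probability mass functions).

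Next I would replace the two ingredients of the program by their leading-order local expansions. For the objective, the second-order Taylor expansion of relative entropy about $P$ gives
\[
D(P+\Delta\,\|\,P)=\tfrac12\sum_{x\in\calX}\frac{\Delta(x)^2}{P(x)}+o(\|\Delta\|^2),
\]
i.e.\ a quadratic form with the inverse-$P$ (chi-square) metric. For the constraint, the assumed twice-differentiability of $R(\cdot,D)$ at $P$ yields
\[
R(P+\Delta,D)=R(P,D)+\sum_{x\in\calX}R'(x;P,D)\,\Delta(x)+O(\|\Delta\|^2),
\]
so the active constraint becomes, to first order, the hyperplane $\sum_x R'(x;P,D)\Delta(x)=\delta$. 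The localized problem is thus the quadratic program
\[
\min\Big\{\tfrac12\sum_{x}\tfrac{\Delta(x)^2}{P(x)}:\ \sum_{x}\Delta(x)=0,\ \sum_{x}R'(x;P,D)\,\Delta(x)\ge\delta\Big\}.
\]

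I would then solve this program exactly by Lagrange multipliers. Setting $\Delta(x)/P(x)=\lambda+\mu\,R'(x;P,D)$ and imposing $\sum_x\Delta(x)=0$ forces $\lambda=-\mu\,\bbE_P[R'(X;P,D)]$, whence $\Delta(x)=\mu\,P(x)\big(R'(x;P,D)-\bbE_P[R'(X;P,D)]\big)$. Imposing the active constraint and recognizing $\var_X[R'(X;P,D)]=V(P,D)$ gives $\mu=\delta/V(P,D)$, and substituting back produces optimal value $\tfrac12\mu^2 V(P,D)=\delta^2/(2V(P,D))$. Dividing by $\delta^2$ and letting $\delta\to0$ yields the claimed limit. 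The positivity assumption $V(P,D)>0$ is exactly what keeps the constraint gradient nontrivial and the program well-posed, and it forces the minimizer to scale as $\|\Delta\|=\Theta(\delta)$.

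The main obstacle is rigor, not the computation: one must justify that replacing the exact divergence and the exact rate-distortion constraint by their quadratic and affine surrogates perturbs the optimal value only by $o(\delta^2)$. This is a two-sided sandwich argument—constructing an explicit feasible $Q_\delta$ from the quadratic solution to obtain the upper bound, and lower-bounding $D(Q\,\|\,P)$ over the true feasible set using the uniform control of the remainder terms afforded by twice-differentiability on a fixed neighborhood of $P$. Because $\|\Delta\|=\Theta(\delta)$ at the optimum, the $O(\|\Delta\|^2)$ constraint error and the $o(\|\Delta\|^2)$ divergence error both feed back only at order $o(\delta^2)$, so they do not affect the leading coefficient. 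This is precisely the estimate carried out in~\cite[Theorem 8]{Tan11_IT}, which is why the proof is essentially identical.
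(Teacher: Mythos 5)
Your proposal is correct and takes essentially the same route as the paper, which proves the lemma by approximating the divergence by the chi-square quadratic form and the constraint set $\{Q: R(Q,D)\ge R(P,D)+\delta\}$ by an affine subspace, deferring the $o(\delta^2)$ error control to~\cite[Theorem 8]{Tan11_IT}. Your explicit Lagrange-multiplier solution of the localized quadratic program, yielding $\Delta(x)=\mu P(x)\bigl(R'(x;P,D)-\bbE_P[R'(X;P,D)]\bigr)$ with optimal value $\delta^2/(2V(P,D))$, is precisely the computation underlying that cited result.
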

In the sequel, we assume that the limit in~\eqref{eqn:taylor_marton} exists. Otherwise, the results are modified accordingly by considering the upper and lower limits in~\eqref{eqn:taylor_marton} and replacing the dispersion by its upper and lower limit versions.  We   first prove the direct part of Theorem~\ref{thm:mdp} in~\eqref{eqn:direct} followed by the converse in~\eqref{eqn:converse}.
\begin{proof}
The code construction proceeds along the lines of that in~\cite{ingber11}. Fix  a sequence $\epsilon_n$ satisfying~\eqref{eqn:seq}.  From the refined type covering lemma by Berger (stated in~\cite{Yu93}), for every type $Q\in\calP_n(\calX)$ there exists a set $\calC_{Q}$ that completely $D$-covers $\calT_Q^n$ (i.e., for every $x^n\in\calT_Q^n$ there exists an $\hatx^n\in \calC_Q$ such that $d(x^n,\hatx^n)\le D$) and $\calC_Q$ has rate 
\begin{equation}
\frac{1}{n}\log |\calC_Q|\le R(Q,D)+J (|\calX|,|\hatcalX|) \frac{\log n}{n}. \label{eqn:berger}
\end{equation}
where $J$ is some function of the size of the alphabets.  Consider the set $\calC$ that that is the union of all sets that $D$-cover the  types $Q\in \calU_n(D,\epsilon_n)$, defined as 
\begin{align}
\calU_n(D,\epsilon_n):= \big\{Q\in &  \calP_n(\calX):R(Q, D) <  R(P,D)+\epsilon_n' , \nn\\*
&\|Q-P\|_1\le \epsilon_n/\sqrt{V(P,D)}\big\}. 
\end{align}
where $\epsilon_n':= \epsilon_n-J (|\calX|,|\hatcalX|) \frac{\log n}{n} - |\calX|\frac{\log (n+1)}{n}$. The second constraint on the $\ell_1$ distance of the type  $Q$ to the true distribution $P$ is to ensure that $R(\fndot, D)$ is differentiable. This is also done in \cite[Theorem 4]{He09}. Note that if $\epsilon_n$ satisfies \eqref{eqn:seq} so does $\epsilon_n'$. Now, consider the size  of $\calC$:
\begin{align}
 |\calC|&=\sum_{Q\in\calP_n(\calX): R(Q, D) <  R(P,D)+\epsilon_n'} |\calC_Q| \nn \\
& \le  (n+1)^{|\calX|} \exp\left[  n\left(R(Q^*, D)+J (|\calX|,|\hatcalX|) \frac{\log n}{n} \right)\right] \nn \\
& \le  \exp\left[ n\left(R(P, D)+\epsilon_n  \right) \right]   \label{eqn:size_code}
\end{align}
The first inequality applies~\eqref{eqn:berger} and the type counting lemma. Furthermore, $Q^*$ is the dominating type. The second inequality applies the definitions of $\calU_n$  and  $\epsilon_n'$. Take $f_n$ to be the function that maps a sequence $x^n\in\calX^n$ with type $P_{x^n}$ to a predefined  index in $\calC = \cup_{Q\in\calU_n}\calC_Q$ and take $\varphi_n$ to be the function that maps the index to the reproduction sequence in $\calC_{P_{x^n}}$ that $D$-covers $x^n$.  Now, we evaluate the error probability, which is the $P^n$-probability of the types not in $\calU_n(D,\epsilon_n)$. Consider, 
\begin{align}
\bbP( & R(P_{X^n}, D)\ge R(P,D)+\epsilon_n' )\nn\\
&\le\sum_{Q\in\calP_n(\calX):R(Q,D)\ge R(P,D)+\epsilon_n'} P^n\left(\calT_Q^n\right) \nn\\
&\le\sum_{Q\in\calP_n(\calX):R(Q,D)\ge  R(P,D)+\epsilon_n'} \exp(-nD(Q\,||\, P)) \nn\\
&\le (n+1)^{|\calX|}  \exp[ -n F(P, R(P,D)+\epsilon_n', D)  ], \label{eqn:prev1}
\end{align}
where we applied the type counting lemma and the definition of Marton's exponent in the last line. Next, from \cite{Weiss03},  
\begin{eqnarray}
\bbP ( \|P_{X^n}-P\|_1> \epsilon_n/\sqrt{V } ) \le 2^{|\calX|} \exp \left[-n  \epsilon_n^2/(2V)  \right]. \label{eqn:prev2}
\end{eqnarray}
Combining  \eqref{eqn:prev1} and \eqref{eqn:prev2} with the union bound, %the probability of excess distortion is upper bounded as 
%Now, we use Lemma~\ref{lemma:dev} with $\epsilon_n'=o(1)$ in the role of $\delta$  to assert that the probability of excess distortion, which upper bounded by the sum of the previous two terms, can be further upper bounded as 
\begin{align}
& e(  f_n,  \varphi_n, P, D) \nn\\
 &\le 2\exp\left[ -n \left( \frac{\epsilon_n'^2}{2V(P,D)} - o(\epsilon_n'^2)- \frac{|\calX|\log (n+1)}{n} \right)\right],\nn
\end{align}
where  we invoked Lemma~\ref{lemma:dev} with $\epsilon_n'=o(1)$ in the role of $\delta$. Now, we take the logarithm and normalize  by $n\epsilon_n^2$ to assert the achievability part of the theorem in \eqref{eqn:direct}. Note that we used the fact that $\frac{\log n}{n\epsilon_n^2}\to 0$. 

Now for the converse, we fix a code $\{(f_n,\varphi_n)\}_{n\in\bbN}$ of rate $R_n=\frac{1}{n}\log \|f_n\|\le R(P,D)+\epsilon_n$ and observe that 
\begin{eqnarray}
 e(  f_n,  \varphi_n, P, D) \ge \bbP(d(X^n,\hatX^n)>D|\calE_{\Psi_n})\bbP(\calE_{\Psi_n}). \label{eqn:joint_lb}
\end{eqnarray}
where the event $\calE_{\Psi_n}:=\{ R(P_{X^n}, D)\ge R_n+\Psi_n\}$ and $P_{X^n}$ is the type of   $X^n$. From the converse of the type covering lemma~\cite[Lemma~3]{Zhang97}, for any type $Q\in\calP_n(\calX)$ such that $R(Q,D)>R$, the  fraction of   $\calT_Q^n$ that is covered by any set is no greater than $\exp[-n(R(Q,D)-R+K(|\calX|,|\hatcalX|)\frac{\log n}{n} )]$. Hence,  the first term above can be bounded as 
\begin{align}
\bbP(&d(X^n,\hatX^n)>D|\calE_{\Psi_n})  \nn\\
& \ge 1-\exp\left[-n \left( \Psi_n+ K(|\calX|, |\hatcalX|) \frac{\log n}{n} \right) \right]   \label{eqn:lower_}
\end{align}
Put $\Psi_n := (K(|\calX|, |\hatcalX|)+1) \frac{\log n}{n}$. Then, \eqref{eqn:lower_} yields
\begin{equation}
\bbP(d(X^n,\hatX^n)>D|\calE_{\Psi_n})\ge 1-\frac{1}{n}\ge\frac{1}{2}. \label{eqn:half}
\end{equation}
Hence, it remains to bound the second term in~\eqref{eqn:joint_lb}. Let $\epsilon_n' := \epsilon_n+\Psi_n$ and consider,
\begin{align}
  P^n(\calE_{\Psi_n})   &=   \bbP(R(P_{X^n}, D)-R(P,D)\ge R-R(P,D)\!+\!\Psi_n)\nn \\
&\ge \bbP(R(P_{X^n}, D)-R(P,D)\ge \epsilon_n+\Psi_n) \nn\\
&=\sum_{Q\in\calP_n(\calX): R(Q,D)\ge R(P,D)+\epsilon_n'}    P^n(\calT_Q^n) \nn\\
&\ge \sum_{Q\in\calP_n(\calX): R(Q,D)\ge R(P,D)+\epsilon_n'}   \frac{\exp(-nD(Q\, ||\, P))}{(n+1)^{|\calX|}} \nn \\
&\ge (n+1)^{-|\calX|} \exp\left[-n D(Q^{(n)}\, ||\, P)  \right] \label{eqn:first_term}
\end{align}
where the first inequality is from the definition of $R_n \le R(P,D)+\epsilon_n$ and in the last inequality we defined the type $Q^{(n)}:=\argmin_{Q\in\calP_n(\calX): R(Q,D)\ge R(P,D)+\epsilon_n'} D(Q\,||\, P)$.  In the appendix, we prove the following   key continuity statement.
\begin{lemma}  \label{lem:cont}
If $\epsilon_n'$ satisfies \eqref{eqn:seq}, the types $Q^{(n)}$ satisfy
\begin{equation}
\lim_{n\to\infty}\frac{D(Q^{(n)}\, ||\, P) }{  F(P,R(P,D)+\epsilon_n', D )}= 1 . \label{eqn:eta}
\end{equation}
\end{lemma}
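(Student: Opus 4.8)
The plan is to sandwich $D(Q^{(n)}\,||\,P)$ between $F(P,R(P,D)+\epsilon_n',D)$ and a quantity exceeding it by only a vanishing relative amount. Write $\delta_n:=\epsilon_n'$ and let $Q_n^*$ be a minimizer in the \emph{continuous} problem defining Marton's exponent, so that $F(P,R(P,D)+\delta_n,D)=D(Q_n^*\,||\,P)$ (existence follows from compactness of $\calP(\calX)$ and continuity). Since $\calP_n(\calX)\subset\calP(\calX)$ and $Q^{(n)}$ optimizes over the smaller feasible set, the lower bound $D(Q^{(n)}\,||\,P)\ge F(P,R(P,D)+\delta_n,D)$ is immediate. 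Thus the whole content of the lemma is the matching upper bound: I must exhibit an $n$-type $\tilde Q_n$ with $R(\tilde Q_n,D)\ge R(P,D)+\delta_n$ whose divergence exceeds $D(Q_n^*\,||\,P)$ only negligibly, after which minimality of $Q^{(n)}$ gives $D(Q^{(n)}\,||\,P)\le D(\tilde Q_n\,||\,P)$.

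To build $\tilde Q_n$, I would start from $Q_n^*$, whose active constraint is $R(Q_n^*,D)=R(P,D)+\delta_n$, perturb it slightly in a direction that strictly increases the rate, and then round to the nearest $n$-type. Rounding displaces a distribution by at most $O(|\calX|/n)$ in $\ell_1$, and since $R(\cdot,D)$ is twice differentiable (hence locally Lipschitz) near $P$, this alters the rate by only $O(1/n)$. Because $V(P,D)>0$, the partial derivatives $R'(x;P,D)$ are not all equal, so $R(\cdot,D)$ has a nonzero gradient along the simplex and there is a feasible tangent direction $v$ along which $R$ strictly increases; perturbing by $\eta v$ with $\eta=\Theta(1/n)$ chosen to dominate the rounding error guarantees $R(\tilde Q_n,D)\ge R(P,D)+\delta_n$.

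It remains to estimate $D(\tilde Q_n\,||\,P)-D(Q_n^*\,||\,P)$. A first-order Taylor expansion of $D(\cdot\,||\,P)$ at $Q_n^*$ yields $\langle\nabla_Q D(Q_n^*\,||\,P),\Delta\rangle+O(\|\Delta\|^2)$, where $\Delta$ is the combined perturbation-plus-rounding displacement with $\|\Delta\|_1=O(1/n)$. The key observation is that the minimizer sits close to $P$: by Pinsker's inequality and Lemma~\ref{lemma:dev}, $\|Q_n^*-P\|_1\le\sqrt{2D(Q_n^*\,||\,P)}=O(\delta_n)$, so each gradient component $\log(Q_n^*(x)/P(x))$ is only $O(\delta_n)$. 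Hence the linear term is $O(\delta_n/n)$ and the quadratic term $O(1/n^2)$, giving $D(\tilde Q_n\,||\,P)=F(P,R(P,D)+\delta_n,D)+O(\delta_n/n)$. Dividing by $F(P,R(P,D)+\delta_n,D)=\Theta(\delta_n^2)$ (Lemma~\ref{lemma:dev} again) gives
\[
1\le\frac{D(Q^{(n)}\,||\,P)}{F(P,R(P,D)+\delta_n,D)}\le 1+O\!\left(\frac{1}{n\delta_n}\right),
\]
and since \eqref{eqn:seq} forces $n\delta_n\to\infty$, the right side tends to $1$, establishing \eqref{eqn:eta}.

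The \textbf{main obstacle} is the constraint-repair step: a single $\Theta(1/n)$ move must simultaneously push the point back into the feasible region by more than any rounding can undo, yet stay small enough that the induced change in divergence is only $O(\delta_n/n)$. This hinges on $V(P,D)>0$ together with twice-differentiability, which furnish a well-defined ascent direction for $R$ with curvature bounded uniformly over the shrinking neighborhood of $P$ that $Q_n^*$ occupies as $\delta_n\to0$; verifying this uniformity (so that the implied constants do not degrade as $n\to\infty$) is the delicate point.
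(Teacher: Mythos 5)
Your proposal is correct, and it follows the same high-level strategy as the paper's proof: compare $D(Q^{(n)}\,||\,P)$ against the continuous Marton minimizer, control the gap between the discrete and continuous optima via an $\ell_1$-Lipschitz/Taylor estimate of the divergence, and normalize by $F(P,R(P,D)+\epsilon_n',D)=\Theta(\epsilon_n'^2)$ using Lemma~\ref{lemma:dev}. Where the two differ is in how the gap is controlled, and here your version is actually the more complete one. The paper simply asserts that the optimizing $n$-type satisfies $\|Q^{(n)}-Q_{\mathrm{M}}^{(n)}\|_1=O(1/n)$; that assertion is precisely the delicate point, since the nearest $n$-type to $Q_{\mathrm{M}}^{(n)}$ may fall outside the constraint set $\{Q: R(Q,D)\ge R(P,D)+\epsilon_n'\}$, and nothing a priori forces the discrete minimizer itself to sit near the continuous one. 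Your construction sidesteps this: you build an explicit \emph{feasible} test type by perturbing $Q_n^*$ along an ascent direction of $R(\cdot,D)$ (which exists because $V(P,D)>0$ makes the tangential gradient nonvanishing) before rounding, and then invoke optimality of $Q^{(n)}$ only through the inequality $D(Q^{(n)}\,||\,P)\le D(\tilde Q_n\,||\,P)$. This is the argument the paper's one-line claim implicitly requires. Your Pinsker-based refinement --- $\|Q_n^*-P\|_1=O(\epsilon_n')$, hence gradient components $O(\epsilon_n')$ and a gap of $O(\epsilon_n'/n)$ rather than $O(1/n)$ --- buys something concrete: the lemma then needs only $n\epsilon_n'\to\infty$, whereas the paper's $O(1/n)$ gap divided by a denominator of order $\omega(\frac{\log n}{n})$ leans on the full strength of \eqref{eqn:seq}. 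Finally, the uniformity issue you flag is indeed the right thing to verify, and it is covered by the standing hypotheses of Theorem~\ref{thm:mdp}: twice differentiability of $R(\cdot,D)$ in a fixed neighborhood of $P$, together with $Q_n^*\to P$ (again by Pinsker and Lemma~\ref{lemma:dev}), keeps the ascent rate and curvature constants uniform for all large $n$.
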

Let $\eta>0$. For $n$ large enough, the ratio in~\eqref{eqn:eta} is smaller than $1+\eta$.  Uniting~\eqref{eqn:joint_lb} -- \eqref{eqn:eta} yields
\begin{align}
& e(  f_n,  \varphi_n, P, D) \nn\\
& \ge \frac{1}{2}(n+1)^{-|\calX|} \exp[-n (1+\eta)  F(P,R(P,D)+\epsilon_n', D ) ] \nn \\
& \ge \frac{1}{2}(n+1)^{-|\calX|} \exp\left[-n (1+\eta)  \left( \frac{\epsilon_n'^2}{2V(P, D )}  + o( \epsilon_n'^2 ) \right) \right]   . \nn
\end{align}
The last inequality is an application of Lemma~\ref{lemma:dev} with $\epsilon_n'=o(1)$ in the role of $\delta$. Now, we take the logarithm and normalize  by $n\epsilon_n^2$ to establish the converse  noting that $\eta$ is arbitrary,   $\Psi_n= O(\frac{\log n}{n})$ and $\frac{\log n}{n\epsilon_n^2}\to 0$. The latter allows us to assert that $\epsilon_n'/\epsilon_n \to 1$.  \end{proof}
Note that the {\em multiplicative} nature of \eqref{eqn:eta} is necessary to establish Theorem~\ref{thm:mdp}.  The analysis for the probability of correct decoding  $1-e(  f_n,\varphi_n, P, D)$ in the MD regime is analogous and is stated in the following:
\begin{theorem} \label{thm:mdp_less_g} 
Let $\epsilon_n$ be any positive sequence satisfying~\eqref{eqn:seq}. Assume that $R(Q,D)$ is twice differentiable w.r.t.\ $Q$ in a   neighborhood of $P$ and   $V(P,D)>0$ There exists a rate-distortion code $\{(f_n,\varphi_n)\}_{n\in\bbN}$ with rates $\frac{1}{n}\log \|f_n\| \ge R(P,D)-\epsilon_n$   such that 
\begin{equation}
\liminf_{n\to\infty}\frac{1}{n\epsilon_n^2} \log \, (1-e(  f_n,\varphi_n, P, D) )\ge - \frac{1}{2 V(P,D)}. \label{eqn:direct_correct}
\end{equation}
Furthermore, every rate-distortion  code   $\{(f_n,\varphi_n)\}_{n\in\bbN}$ with rates $\frac{1}{n}\log \|f_n\| \ge R(P,D)-\epsilon_n$      must satisfy
\begin{equation}
\limsup_{n\to\infty}\frac{1}{n\epsilon_n^2} \log \, (1-e(  f_n,\varphi_n, P, D) )\le - \frac{1}{2 V(P,D)}. \label{eqn:converse_correct}
\end{equation}
\end{theorem}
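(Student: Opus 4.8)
The plan is to mirror the proof of Theorem~\ref{thm:mdp}, replacing Marton's exponent $F$ by the correct-decoding exponent $G$ of \eqref{eqn:iri_exp}, the excess-distortion probability $e$ by the correct-decoding probability $1-e$, and interchanging the roles of the two type-covering lemmas. The first preliminary step is to record the $G$-analogue of Lemma~\ref{lemma:dev}, namely that $\lim_{\delta\to0} G(P,R(P,D)-\delta,D)/\delta^2 = 1/(2V(P,D))$, together with the corresponding continuity statement (the analogue of Lemma~\ref{lem:cont}) relating $D(Q^{(n)}\,||\,P)$ for the dominating $n$-type to $G(P,R(P,D)-\epsilon_n',D)$. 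Both follow from the identical argument used for $F$: linearising the constraint $R(Q,D)\le R(P,D)-\delta$ about $P$ as $\langle R'(\fndot;P,D),Q-P\rangle \le -\delta$ and minimising the quadratic ($\chi^2$) approximation of $D(Q\,||\,P)$ subject to a single affine constraint again produces $\delta^2/(2V(P,D))$, since the sign of the linear constraint does not affect the value of the quadratic program.

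For the direct part I would cover the low-rate types. Using Berger's refined type-covering lemma \cite{Yu93} as in \eqref{eqn:berger}, let $\calC$ be the union of the covering sets $\calC_Q$ over all types $Q$ with $R(Q,D)\le R(P,D)-\epsilon_n'$, where $\epsilon_n' := \epsilon_n - J(|\calX|,|\hatcalX|)\frac{\log n}{n} - |\calX|\frac{\log(n+1)}{n}$ absorbs the covering and type-counting overheads; the type-counting lemma then gives rate $\frac1n\log|\calC|\le R(P,D)-\epsilon_n$. A source sequence is reproduced within distortion $D$ whenever its type lies in this set, so $1-e \ge \bbP(R(P_{X^n},D)\le R(P,D)-\epsilon_n') \ge (n+1)^{-|\calX|}\exp[-nD(Q^{(n)}\,||\,P)]$, where $Q^{(n)}$ is the minimiser of $D(Q\,||\,P)$ over this set. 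The continuity statement and the $G$-analogue of Lemma~\ref{lemma:dev} turn the exponent into $G(P,R(P,D)-\epsilon_n',D)(1+o(1)) = \frac{\epsilon_n'^2}{2V(P,D)}(1+o(1))$; taking logarithms, normalising by $n\epsilon_n^2$ and using $\epsilon_n'/\epsilon_n\to1$ (which holds because $\frac{\log n}{n\epsilon_n^2}\to0$ by \eqref{eqn:seq}) yields \eqref{eqn:direct_correct}.

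For the converse I would upper-bound $1-e$ for an arbitrary code of rate $R_n\le R(P,D)-\epsilon_n$ by splitting the source types at the level $R(Q,D)=R_n$. The fully coverable part contributes at most $\bbP(R(P_{X^n},D)\le R_n)\le (n+1)^{|\calX|}\exp[-nG(P,R_n,D)]$ by Sanov. For the part with $R(Q,D)>R_n$, the converse type-covering lemma \cite[Lemma~3]{Zhang97} bounds the covered fraction of $\calT_Q^n$ by $\exp[-n(R(Q,D)-R_n+K\frac{\log n}{n})]$; combining this with $P^n(\calT_Q^n)\le\exp[-nD(Q\,||\,P)]$ and using $D(Q\,||\,P)\ge G(P,R(Q,D),D)$, the exponent of a type at rate level $r:=R(Q,D)>R_n$ is at least $(r-R_n)+G(P,r,D)$. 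Thus both parts are governed by $\min_{r\ge R_n}\{(r-R_n)+G(P,r,D)\}$, and the crux is to show this minimum equals $G(P,R_n,D)$, i.e.\ that it is attained at the left endpoint $r=R_n$.

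The main obstacle is exactly this last optimisation. A purely polynomial bound on the covered fraction is not enough: dropping the exponential factor $\exp[-n(R(Q,D)-R_n)]$ leaves only a $\poly(n)$ saving, whose contribution to $\frac1{n\epsilon_n^2}\log(1-e)$ tends to $0$ rather than $-\frac1{2V(P,D)}$, so the exponential decay of the covered fraction must be retained and traded against $D(Q\,||\,P)$. To locate the minimiser, note $h(r):=(r-R_n)+G(P,r,D)$ has $h(R_n)=G(P,R_n,D)$ and $h'(r)=1+\partial_r G(P,r,D)=1-\lambda(r)$, where $\lambda(r)\ge0$ is the Lagrange multiplier of the defining program of $G$. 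In the binding regime $R_n=R(P,D)-\Theta(\epsilon_n)$ the relevant interval $[R_n,R(P,D)]$ has width $O(\epsilon_n)$, and the quadratic behaviour from the $G$-analogue of Lemma~\ref{lemma:dev} gives $\lambda(r)=(R(P,D)-r)/V(P,D)+o(\epsilon_n)=O(\epsilon_n)$, so $h'>0$ and the minimum is indeed at $r=R_n$ (for $R_n$ bounded away from $R(P,D)$ the whole converse is trivial, as then $G(P,R_n,D)=\Theta(1)\gg\epsilon_n^2$). Hence $1-e\le\poly(n)\exp[-nG(P,R_n,D)]\le\poly(n)\exp[-nG(P,R(P,D)-\epsilon_n,D)]$; taking logarithms, normalising, and invoking the $G$-analogue of Lemma~\ref{lemma:dev} together with $\frac{\log n}{n\epsilon_n^2}\to0$ establishes \eqref{eqn:converse_correct}.
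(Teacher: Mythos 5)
Your proposal is correct, and it supplies exactly the details that the paper compresses into the one-line proof ``Similar to Theorem~\ref{thm:mdp}.'' The preliminary steps and the direct part are indeed faithful mirror images of the paper's argument: the quadratic program obtained by linearising the constraint set of $G$ about $P$ is symmetric in the sign of the perturbation, so the $G$-analogue of Lemma~\ref{lemma:dev} and the type-continuity statement (Lemma~\ref{lem:cont}) transfer verbatim, and your covering construction plus the Sanov lower bound through the dominating type is precisely the intended achievability proof. Where you genuinely go beyond a naive transcription is the converse, and rightly so: the converse of Theorem~\ref{thm:mdp} gets away with choosing $\Psi_n=\Theta(\frac{\log n}{n})$ and extracting only a polynomial ($\tfrac1n$) bound on the covered fraction, but here the quantity being bounded, $1-e$, must be shown to be $\exp[-n\epsilon_n^2/(2V(P,D))(1-o(1))]$, which is $n^{-\omega(1)}$ under \eqref{eqn:seq} --- so, as you observe, the exponential factor $\exp[-n(R(Q,D)-R_n)]$ from \cite[Lemma~3]{Zhang97} must be retained and traded against $D(Q\,||\,P)$. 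Your resolution of that trade-off is sound: on the window $r\in[R_n,R(P,D)]$ the envelope computation gives $h'(r)=1-\lambda(r)$ with $\lambda(r)=(R(P,D)-r)/V(P,D)+o(1)\to0$, so the minimum of $(r-R_n)+G(P,r,D)$ sits at the endpoint $r=R_n$ and equals $G(P,R_n,D)$, while types with $r\ge R(P,D)$ contribute exponent at least $\epsilon_n\gg\epsilon_n^2$ and the $\poly(n)$ factors are absorbed since $\frac{\log n}{n\epsilon_n^2}\to0$. (A coarser split at $r=R_n+c\epsilon_n^2$ with $c>1/(2V(P,D))$, using $D(Q\,||\,P)\ge0$ beyond the split, would also do, but it is the same idea.) Two small repairs: in the direct part your $\epsilon_n'$ should \emph{add} the $O(\frac{\log n}{n})$ covering and type-counting overheads, i.e.\ $\epsilon_n'=\epsilon_n+J(|\calX|,|\hatcalX|)\frac{\log n}{n}+|\calX|\frac{\log(n+1)}{n}$, so that the code rate stays below $R(P,D)-\epsilon_n$ (immaterial asymptotically, since $\frac{\log n}{n}=o(\epsilon_n^2)$ and hence $\epsilon_n'/\epsilon_n\to1$); and you silently read the theorem's rate condition as $\frac1n\log\|f_n\|\le R(P,D)-\epsilon_n$, which is the correct reading --- the printed ``$\ge$'' appears to be a typo mirroring Theorem~\ref{thm:mdp}, since as stated it would make the direct part vacuous and the converse false for codes of rate above $R(P,D)$.
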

\begin{proof} Similar to Theorem~\ref{thm:mdp}. \end{proof}
\section{Quadratic Gaussian Source Coding}
We now turn our attention to the quadratic Gaussian setting where $X^n$ is a length-$n$ vector whose entries are identically distributed as zero-mean Gaussians with variance $\sigma^2$. The distortion measure is $d(x,\hatx):=(x-\hatx)^2$. It is known \cite{Sha59} that in this case, the rate-distortion function takes the form
\begin{equation}
R(\sigma^2,D) = \frac{1}{2}\log \max\left\{1, \frac{\sigma^2}{D} \right\}. \label{eqn:rd_gauss}
\end{equation}
Furthermore, Ihara and Kubo \cite{Ihara00} showed that the analogue of Marton's exponent in \eqref{eqn:marton} also holds in the Gaussian setting. Indeed, it is shown that the excess distortion exponent is 
\begin{equation}
F(\sigma^2, R, D) = \frac{1}{2} \left[ \frac{D}{\sigma^2} e^{2R} -1- \log \left( \frac{D}{\sigma^2} e^{2R} \right)\right]. \label{eqn:ihara}
\end{equation}
whenever $R> R(\sigma^2, D)$ and zero otherwise. The exponent for correct decoding $G(\sigma^2,R,D)$ takes the same form as in~\eqref{eqn:ihara} when $R<R(\sigma^2,D)$ and zero otherwise.  In this case, it is easy to show by direct differentiation of $F(\sigma^2, R, D)$  (or $G(\sigma^2, R, D)$) that the  dispersion for lossy source coding is  
\begin{equation}
V(\sigma^2, D) = \frac{1}{2}, \label{eqn:disp_gauss}
\end{equation}
for all $\sigma^2$ and all $D$. In analogy to Theorem~\ref{thm:mdp}, we have the following in the quadratic Gaussian setting:
\begin{theorem} \label{thm:mdp_gauss} 
Let $\epsilon_n$ be any positive sequence satisfying 
\begin{equation}
\lim_{n\to\infty} \epsilon_n = 0,\qquad  \lim_{n\to\infty} {n \epsilon_n^2 }=\infty.\label{eqn:seq2}
\end{equation}
There exists a rate-distortion code $\{(f_n,\varphi_n)\}_{n\in\bbN}$ with rates $\frac{1}{n}\log \|f_n\| \le R(\sigma^2,D)+\epsilon_n$  such that 
\begin{equation}
\limsup_{n\to\infty}\frac{1}{n\epsilon_n^2} \log  e(  f_n,\varphi_n,\sigma^2, D)\le - 1  . \label{eqn:direct_g}
\end{equation}
Furthermore, every rate-distortion  code   $\{(f_n,\varphi_n)\}_{n\in\bbN}$ with rates $\frac{1}{n}\log \|f_n\| \le R(P,D)+\epsilon_n$    must satisfy
\begin{equation}
\liminf_{n\to\infty}\frac{1}{n\epsilon_n^2} \log  e(  f_n,\varphi_n, \sigma^2, D)\ge -1. \label{eqn:converse_g}
\end{equation}
\end{theorem}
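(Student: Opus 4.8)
The plan is to follow the two-part structure of the proof of Theorem~\ref{thm:mdp}, replacing the method-of-types machinery by the corresponding sphere-covering arguments for the quadratic Gaussian source. The key simplification is that, since both $R(\sigma^2,D)$ in \eqref{eqn:rd_gauss} and $F(\sigma^2,R,D)$ in \eqref{eqn:ihara} depend on the source only through its power, the entire moderate-deviation analysis reduces to that of the scalar empirical-power statistic $\hat\sigma_n^2:=\frac{1}{n}\|X^n\|^2=\frac{1}{n}\sum_{i=1}^n X_i^2$. Because $n\hat\sigma_n^2/\sigma^2$ is chi-squared with $n$ degrees of freedom, the relevant probabilities admit clean Chernoff-type and moderate-deviation estimates with no polynomial-in-$n$ prefactors tied to an alphabet size; this is precisely what permits the removal of the $\log n$ cushion present in \eqref{eqn:seq} and the use of the weaker condition \eqref{eqn:seq2}. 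Throughout I would use the Gaussian analogue of Lemma~\ref{lemma:dev}, namely
\begin{equation}
\lim_{\delta\to 0}\frac{F(\sigma^2,R(\sigma^2,D)+\delta,D)}{\delta^2}=\frac{1}{2V(\sigma^2,D)}=1, \nonumber
\end{equation}
which follows by a second-order Taylor expansion of \eqref{eqn:ihara} about $R(\sigma^2,D)$, exactly the computation that yields $V(\sigma^2,D)=\tfrac12$ in \eqref{eqn:disp_gauss}.

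For the direct part I would mimic the construction of \cite{ingber11} in the Gaussian setting. Quantize the power axis into thin shells and, for each power level $s^2$, invoke the Gaussian sphere-covering lemma to $D$-cover the shell $\{x^n:\hat\sigma_n^2(x^n)\approx s^2\}$ with a codebook of rate at most $R(s^2,D)+\xi_n$, where the overhead $\xi_n$ (from covering and from the sub-exponential number of shells) tends to $0$ and, crucially, satisfies $\xi_n/\epsilon_n\to 0$ under \eqref{eqn:seq2} alone. Let $\calC$ be the union of the covers of all shells with $R(s^2,D)\le R(\sigma^2,D)+\epsilon_n'$, where $\epsilon_n':=\epsilon_n-\xi_n$, so that $|\calC|\le\exp[n(R(\sigma^2,D)+\epsilon_n)]$ and $\epsilon_n'/\epsilon_n\to 1$. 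The excess-distortion probability is then at most the probability that $x^n$ falls in an uncovered shell, i.e.\ $\bbP(R(\hat\sigma_n^2,D)\ge R(\sigma^2,D)+\epsilon_n')=\bbP(\hat\sigma_n^2\ge\sigma^2 e^{2\epsilon_n'})$ (using $\sigma^2>D$). Estimating this chi-squared tail by its exponent \eqref{eqn:ihara} gives $\exp[-nF(\sigma^2,R(\sigma^2,D)+\epsilon_n',D)]$; invoking the analogue of Lemma~\ref{lemma:dev} with $\delta=\epsilon_n'=o(1)$, taking logarithms and normalizing by $n\epsilon_n^2$ yields \eqref{eqn:direct_g}.

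For the converse I would reproduce the decomposition \eqref{eqn:joint_lb}, taking $\calE_{\Psi_n}=\{R(\hat\sigma_n^2,D)\ge R_n+\Psi_n\}$ with a slack $\Psi_n\to 0$ that now need only be $o(\epsilon_n)$. The converse sphere-covering lemma for Gaussian sources shows that a shell with $R(s^2,D)>R_n+\Psi_n$ is covered on only an exponentially small fraction by any rate-$R_n$ codebook, so that $\bbP(d(X^n,\hatX^n)>D\mid\calE_{\Psi_n})\ge\tfrac12$ for an appropriate $\Psi_n$, exactly as in \eqref{eqn:half}. It then remains to lower bound $\bbP(\calE_{\Psi_n})\ge\bbP(\hat\sigma_n^2\ge\sigma^2 e^{2\epsilon_n'})$ with $\epsilon_n':=\epsilon_n+\Psi_n$ by the matching moderate-deviation lower bound for the chi-squared statistic, which is $\exp[-nF(\sigma^2,R(\sigma^2,D)+\epsilon_n',D)(1+o(1))]$. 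Applying the Gaussian Lemma~\ref{lemma:dev} analogue with $\delta=\epsilon_n'$, normalizing by $n\epsilon_n^2$ and using $\epsilon_n'/\epsilon_n\to 1$ establishes \eqref{eqn:converse_g}.

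The main obstacle is the moderate-deviation estimate of $\hat\sigma_n^2$: one must show that the tail $\bbP(\hat\sigma_n^2\ge\sigma^2 e^{2\epsilon_n'})$ is captured, to leading exponential order at the scale $n\epsilon_n^2$, by $F(\sigma^2,R(\sigma^2,D)+\epsilon_n',D)$ from \emph{both} sides, and that the covering overhead $\xi_n$ and the slack $\Psi_n$ are genuinely negligible after dividing by $n\epsilon_n^2$. It is here that the continuous nature of the problem helps rather than hurts: the covering penalties for Gaussian shells carry no alphabet-size type-counting factor, so there is no $\frac{\log n}{n}$ term that must be dominated, and the single clean scalar large-deviation rate function for $\hat\sigma_n^2$ supplies tight matching upper and lower bounds under the weaker hypothesis \eqref{eqn:seq2}.
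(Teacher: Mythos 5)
Your direct part coincides with the paper's, modulo cosmetic details: the paper also reduces everything to the empirical power $\hat{\sigma}_n^2=\frac{1}{n}\sum_i X_i^2$, but instead of quantizing into shells it covers the whole set $\{\hsigma^2 : |R(\hsigma^2,D)-R(\sigma^2,D)|<\epsilon_n'\}$ in one shot using Verger-Gaugry~\cite[Thm.~1.2]{VG05}, whose prefactor $6n^{5/2}$ gives precisely $\epsilon_n'=\epsilon_n-\frac{5\log n}{2n}-\frac{\log 6}{n}$; the resulting $\chi_1^2$ tail is then bounded by the prefactor-free Chernoff/Cram\'{e}r upper bound with exponent $\frac{1}{2}(e^{2\epsilon_n'}-1-2\epsilon_n')$, which equals $F(\sigma^2,R(\sigma^2,D)+\epsilon_n',D)$ of \eqref{eqn:ihara} exactly, and Taylor expansion gives $n\epsilon_n'^2(1+o(1))$. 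Your diagnosis that the absence of type-counting polynomial prefactors is what permits the weaker condition \eqref{eqn:seq2} is exactly the paper's point, and your rate-overhead bookkeeping ($\xi_n/\epsilon_n\to 0$ since $\frac{\log n}{n}=o(\epsilon_n)$ when $n\epsilon_n^2\to\infty$) is correct.

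Your converse, however, takes a genuinely different route, and it is there that a real gap sits. The paper does \emph{not} transplant the decomposition \eqref{eqn:joint_lb}--\eqref{eqn:half}: following Ihara--Kubo~\cite{Ihara00}, it compares the union $\calA_n$ of the $e^{nR_n}$ distortion balls with a centered ball $\calB_n(0,\gamma_n)$ of equal volume, uses the spherical symmetry and unimodality of the Gaussian measure to get $\bbP(X^n\notin\calA_n)\ge\bbP(X^n\notin\calB_n(0,\gamma_n))$ with $\gamma_n$ controlled by the rate via a pure volume count, and thereby reduces the entire converse to a single $\chi^2$ upper-tail estimate, with no covering-converse lemma and no conditioning needed. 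Your shell-conditioned route can be made to work (given $\hat{\sigma}_n^2$ the source is uniform on a sphere, and a spherical-cap estimate bounds the covered fraction by $e^{-n(R(s^2,D)-R_n)+O(\log n)}$, with $\Psi_n=O(\frac{\log n}{n})=o(\epsilon_n)$ harmless under \eqref{eqn:seq2}), but you assert rather than prove your ``converse sphere-covering lemma,'' whereas the paper's volume argument is self-contained. More importantly, both routes funnel into the same crux, which you also only assert: the \emph{lower} bound $\bbP(\hat{\sigma}_n^2\ge\sigma^2 e^{2\epsilon_n'})\ge\exp[-n\epsilon_n'^2(1+o(1))]$. This is not a routine step --- Cram\'{e}r's lower bound does not apply verbatim because the threshold $e^{2\epsilon_n'}$ drifts to the mean --- and the bulk of the paper's converse is a from-scratch proof of exactly this estimate: an exponential change of measure to a tilted law with mean $e^{2\epsilon_n}$, a window of width $\tau_n=\zeta\epsilon_n$, control of the window probability $\beta_n\to\frac{1}{2}$ via Berry--Ess\'{e}en and Chebyshev, and finally $\zeta\to 0$. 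You could legitimately discharge this by citing the moderate-deviations principle for i.i.d.\ sums with finite exponential moments (\cite[Sec.~3.7]{Dembo}, applied to $Y_i=X_i^2/\sigma^2$ with variance $2$ and threshold $e^{2\epsilon_n'}-1\sim 2\epsilon_n'$, matching the exponent $n\epsilon_n'^2$), but as written the one genuinely technical estimate of the converse is left unproved and uncited.
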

In contrast to the DMS case, the dispersion for the quadratic Gaussian case~\eqref{eqn:disp_gauss} is constant. Hence, the exponents in \eqref{eqn:direct_g} and \eqref{eqn:converse_g} are also constant. Also note from \eqref{eqn:seq2} that the requirement on  $\epsilon_n$ is less stringent than in the DMS case~\eqref{eqn:seq}. In particular, the log factor is no longer required. This is because the method of types is not used in the proof.
\begin{proof}
Fix the sequence $\epsilon_n$. For the direct part, let us consider the set of ``empirical variances''
\begin{eqnarray}
\calU_n( D,\epsilon_n):=\left\{\hsigma^2: |R(\hsigma^2,D)-R(\sigma^2,D)| < \epsilon_n'\right\},
\end{eqnarray}
where  $\epsilon_n':= \epsilon_n - \frac{5\log n}{2n} - \frac{\log 6}{n}$. By using the definition of $R(\sigma^2,D)$ in \eqref{eqn:rd_gauss}, it is easy to see that $\hsigma^2\in \calU_n$ if and only if $e^{-2\epsilon_n'} < \hsigma^2/\sigma^2 < e^{2\epsilon_n'}$. We now use a  result by Verger-Gaugry~\cite[Theorem 1.2]{VG05}, which in our context, says that $6 n^{5/2} (\sigma^2 e^{2\epsilon_n'} /D)^{n/2}$ reconstruction points suffice to $D$-cover  length-$n$ vectors $x^n$ whose empirical variance $\frac{1}{n}\sum_i x_i^2\in \calU_n$. Hence, the size of the code  is bounded as 
\begin{align}
|\calC|\le 6 n^{5/2} (\sigma^2 e^{2\epsilon_n'} /D)^{n/2}\le \exp(n (R(\sigma^2, D) + \epsilon_n )) ,  
\end{align}
where  we used the definition of $\epsilon_n'$. Hence, the rate $R_n\le R(\sigma^2,D)+\epsilon_n$ as required. For the   probability of excess distortion, we have 
\begin{align}
& e(  f_n, \varphi_n, \sigma^2, D) = \bbP \left(  \frac{1}{n}\sum_{i=1}^n X_i^2 \notin \calU_n \right) \nn \\
 &\le \bbP\left(   \frac{1}{n}\sum_{i=1}^n X_i^2  > \sigma^2 e^{ 2\epsilon_n'})  \right) + \bbP\left(   \frac{1}{n}\sum_{i=1}^n X_i^2  < \sigma^2 e^{-2\epsilon_n'} \right) \nn \\
 &\le 4 \exp \left[ -\frac{n}{2} \left( e^{2\epsilon_n' }- 1 - 2\epsilon_n'  \right)\right] .   \label{eqn:e}
\end{align}
The first inequality is by the definition of $\calU_n$ and the union bound. The second is an application of the upper bound of Cram\'{e}r's theorem~\cite{Dembo} applied to the $\chi_1^2$-random variables $X_i^2/\sigma^2$. Now note from Taylor's theorem that  $e^{2\epsilon_n' }- 1 - 2\epsilon_n' = 2\epsilon_n'^2+o(\epsilon_n'^2)$. Taking the logarithm, normalizing by $n\epsilon_n^2$ and taking the upper limit of~\eqref{eqn:e} yields the desired result in~\eqref{eqn:direct_g}.

We now turn our attention to the converse. The gist of the proof follows from the converse in~\cite{Ihara00} but, as we shall see, the error probability analysis is more intricate. Fix codes of rates $R_n=\frac{1}{n}\log \|f_n\|\le R(\sigma^2,D)+\epsilon_n$.  Let the reproduction sequences be denoted as $\hatx^n(m), m  \in \calM_n$. Also, let $\calA_n := \cup_{m\in\calM_n} \calB_n(\hatx^n(m), \sqrt{D})$ where $\calB_n(c^n,r)$ is the $n$-dimensional ball centered at $c^n$ with radius $r$. Now, let $\gamma_n>0$ be such that $\Vol(\calB_n(0,\gamma_n))=\Vol(\calA_n)$. Clearly, $\Vol(\calA_n)\le |\calM_n| \Vol(\calB_n(0,\sqrt{D}))$. Since $R_n=\frac{1}{n}\log|\calM_n|$, 
\begin{equation}
e^{nR_n}\ge \frac{\Vol(\calA_n)}{ \Vol(\calB_n(0,\sqrt{D}))} = \frac{\Vol(\calB_n(0,\gamma_n) )}{ \Vol(\calB_n(0,\sqrt{D}))} = \left(\frac{\gamma_n}{\sqrt{D}}\right)^n .\nn
\end{equation}
Hence, we have $R(\sigma^2,D)+\epsilon_n\ge R_n\ge \frac{1}{2}\log \frac{\gamma_n^2}{D}$, i.e., 
\begin{equation}
\gamma_n\le \sigma^2 e^{2\epsilon_n}.\label{eqn:gam_bd}
\end{equation}
The probability of excess distortion can be lower bounded as:
\begin{align}
e(  f_n, \varphi_n, \sigma^2, D) = \bbP(X^n\notin\calA_n)  \ge \bbP( X^n\notin\calB_n(0,\gamma_n)). \nn
\end{align}
Now define  the random variables $Y_i:=X_i^2/\sigma^2$ and note that the $Y_i$'s are $\chi_1^2$-distributed. With this notation, and using~\eqref{eqn:gam_bd},  
\begin{align}
e(  f_n, \varphi_n, \sigma^2, D) \!\ge\! \bbP\left( \frac{1}{n}\sum_{i=1}^n Y_i \!>\! \frac{\gamma_n}{\sigma^2}  \! \right)\!\ge \!\bbP\left( \frac{1}{n}\sum_{i=1}^n Y_i\!> \! e^{2\epsilon_n}  \! \right) \! .\nn
\end{align}
Recall that for the $\chi_1^2$-distribution, the cumulant generating  function is $\Lambda(\theta)=-\frac{1}{2}\log (1-2\theta)$ and the rate function is $I(y)= \max_{\theta} \{\theta y - \Lambda(\theta)\} = \frac{1}{2}(y-1) -\frac{1}{2}\log y$. Furthermore, $\theta^*(y):=\frac{1}{2}(1-\frac{1}{y})$ is the maximizer. Using the standard  change of measure technique  for the lower bound in Cram\'{e}r's theorem (see proof of   \cite[Theorem 2.2.3]{Dembo}),
\begin{align}
e(  f_n, \varphi_n, \sigma^2, D) \ge \beta_n\exp  \left[-n I(e^{2\epsilon_n}) - \frac{n}{2} (1-e^{-2\epsilon_n}) \tau_n\right],  \nn
\end{align}
where $\beta_n:= \bbP ( \frac{1}{n}\sum_{i=1}^n \tilY_i  \in (e^{2\epsilon_n}, e^{2\epsilon_n}+\tau_n) )$ and $\tau_n$ is a sequence  to be chosen.  The random variables $\tilY_i$ have (tilted) distribution $q(\tily) := \exp[\theta^*(e^{2\epsilon_n})  \tily-\Lambda(\theta^*(e^{2\epsilon_n}))]p(\tily)$ where $p( \fndot)$ is the $\chi_1^2$ distribution of the $Y_i$'s. By the choice of $q(\fndot)$,  $\bbE_q [\tilY_i]=e^{2\epsilon_n}$. Put $\tau_n:=\zeta \epsilon_n$ for some $\zeta>0$. Then, 
\begin{align}
1-  \beta_n &\le \bbP\left(  \frac{1}{n}\sum_{i=1}^n \tilY_i\le e^{2\epsilon_n}\! \right)\!+\!\bbP\left(  \frac{1}{n}\sum_{i=1}^n \tilY_i\ge e^{2\epsilon_n} +\tau_n \! \right) \nn\\
&\le  \frac{1}{2}+ \frac{32}{ \sqrt{n}} + \frac{2}{n\tau_n^{2} }  = \frac{1}{2} + \frac{32}{ \sqrt{n}} +\frac{2}{ n\zeta^2\epsilon_n^2}, \nn% \label{eqn:be}
\end{align}
where in the second inequality, we applied the Berry-Ess\'{e}en theorem to the first term (the third moment of $\tilY_i$ is $15e^{-7\epsilon_n}$) and Chebyshev's inequality to the second. By~\eqref{eqn:seq2},  $\beta_n\to \frac{1}{2}$ from below. With this choice of $\tau_n$,   for $n$ sufficiently large, %$\beta_n\ge\frac{1}{4}$ and so
\begin{equation}
e(  f_n, \varphi_n, \sigma^2, D) \ge \frac{1}{4} \exp  \left[-n   \epsilon_n^2 (1 + \zeta+ o(1)  )\right] , \label{eqn:last_conv}
\end{equation}
where applied the facts $I(e^{2\epsilon_n})=\epsilon_n^2+ o(\epsilon_n^2)$ and $1-e^{-2\epsilon_n} = 2\epsilon_n +o(\epsilon_n)$. The converse in \eqref{eqn:converse_g} follows by taking the logarithm, normalizing by $n\epsilon_n^2$, taking  $n\to\infty$, and finally taking $\zeta\to 0$. 
\end{proof}
The MD setting for the probability of correct decoding of Gaussian sources can also analyzed analogously:
\begin{theorem} \label{thm:mdp_less} 
Let $\epsilon_n$ be any positive sequence satisfying~\eqref{eqn:seq2}.  There exists a rate-distortion code $\{(f_n,\varphi_n)\}_{n\in\bbN}$ with rates $\frac{1}{n}\log \|f_n\| \ge R(\sigma^2,D)-\epsilon_n$  such that 
\begin{equation}
\liminf_{n\to\infty}\frac{1}{n\epsilon_n^2} \log \, (1-e(  f_n,\varphi_n,\sigma^2, D) )\ge - 1. \label{eqn:direct_correct_g}
\end{equation}
Furthermore, every rate-distortion  code   $\{(f_n,\varphi_n)\}_{n\in\bbN}$ with rates $\frac{1}{n}\log \|f_n\| \ge R(\sigma^2,D)-\epsilon_n$    must satisfy
\begin{equation}
\limsup_{n\to\infty}\frac{1}{n\epsilon_n^2} \log \, (1-e(  f_n,\varphi_n, \sigma^2, D) )\le - 1. \label{eqn:converse_correct_g}
\end{equation}
\end{theorem}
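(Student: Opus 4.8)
The plan is to prove Theorem~\ref{thm:mdp_less} by running the proof of Theorem~\ref{thm:mdp_gauss} with the roles of the upper and lower tails of the normalized sum of $\chi_1^2$ variables interchanged. As before, set $Y_i:=X_i^2/\sigma^2$, so that the $Y_i$ are i.i.d.\ $\chi_1^2$ with $\bbE[Y_i]=1$, and recall the cumulant generating function $\Lambda(\theta)=-\tfrac12\log(1-2\theta)$, the rate function $I(y)=\tfrac12(y-1)-\tfrac12\log y$, and the optimal tilt $\theta^*(y)=\tfrac12(1-1/y)$. The correct-decoding probability is $1-e(f_n,\varphi_n,\sigma^2,D)=\bbP(X^n\in\calA_n)$, where $\calA_n$ is the union of the distortion-$D$ balls about the reproduction points. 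Throughout I use Taylor's theorem in the form $I(e^{\pm 2\delta})=\delta^2+o(\delta^2)$ as $\delta\to 0$, which is where the constant $1=(2V(\sigma^2,D))^{-1}$ from~\eqref{eqn:disp_gauss} enters.

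For the direct part I would cover the ball of vectors whose empirical variance is at most $\sigma^2 e^{-2\epsilon_n'}$, with $\epsilon_n':=\epsilon_n+\tfrac{5\log n}{2n}+\tfrac{\log 6}{n}$, so that $\epsilon_n'/\epsilon_n\to 1$ under~\eqref{eqn:seq2}. By the covering estimate of Verger-Gaugry~\cite[Theorem 1.2]{VG05} used in Theorem~\ref{thm:mdp_gauss}, about $6n^{5/2}(\sigma^2 e^{-2\epsilon_n'}/D)^{n/2}$ points suffice, whence the rate is at least $R(\sigma^2,D)-\epsilon_n$ as required. Every $x^n$ in this ball is reproduced within distortion $D$, so $1-e\ge\bbP(\tfrac1n\sum_i Y_i\le e^{-2\epsilon_n'})$, and the task reduces to lower bounding this lower-tail probability by $\exp[-n\epsilon_n^2(1+o(1))]$. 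This is the exact analogue of the converse of Theorem~\ref{thm:mdp_gauss}, but now tilting \emph{downward}: I would tilt the $\chi_1^2$ law to the distribution with mean $e^{-2\epsilon_n'}$ (for which $\theta^*(e^{-2\epsilon_n'})<0$), apply the change-of-measure lower bound in the proof of~\cite[Theorem 2.2.3]{Dembo}, and use the Berry-Ess\'{e}en theorem together with Chebyshev's inequality to show that the tilted mass of an interval of width $\tau_n=\zeta\epsilon_n$ (for a constant $\zeta>0$) about the mean stays bounded away from $0$. Since $I(e^{-2\epsilon_n'})=\epsilon_n'^2+o(\epsilon_n'^2)=\epsilon_n^2+o(\epsilon_n^2)$, normalizing by $n\epsilon_n^2$ and letting $\zeta\to 0$ yields~\eqref{eqn:direct_correct_g}.

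For the converse I would reuse the sphere-hardening step. Fix a code of rate $R_n\le R(\sigma^2,D)-\epsilon_n$, write $\calA_n$ as a union of $|\calM_n|=e^{nR_n}$ distortion-$D$ balls, and let $\gamma_n$ be the radius of the centered ball of equal volume. Since the Gaussian measure of a set of prescribed volume is maximized by the centered ball, $1-e=\bbP(X^n\in\calA_n)\le\bbP(X^n\in\calB_n(0,\gamma_n))=\bbP(\tfrac1n\sum_i Y_i\le \gamma_n^2/(n\sigma^2))$. The volume bound gives $\gamma_n^2/(n\sigma^2)\le e^{2(R_n-R(\sigma^2,D))}\le e^{-2\epsilon_n}$, so $1-e\le\bbP(\tfrac1n\sum_i Y_i\le e^{-2\epsilon_n})$. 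This is an \emph{upper} bound on a lower-tail probability, so the elementary Chernoff/Cram\'{e}r upper bound $\bbP(\tfrac1n\sum_i Y_i\le e^{-2\epsilon_n})\le\exp[-nI(e^{-2\epsilon_n})]$ suffices; with $I(e^{-2\epsilon_n})=\epsilon_n^2+o(\epsilon_n^2)$, normalizing by $n\epsilon_n^2$ gives~\eqref{eqn:converse_correct_g}.

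The main obstacle is the lower bound on the lower-tail probability in the direct part: unlike the one-line Chernoff bound used in the converse, it requires the full change-of-measure argument, and I must verify that the downward tilt $\theta^*(e^{-2\epsilon_n'})$ lies in the domain $\theta<\tfrac12$ of $\Lambda$ (it does, being negative) and that the tilted third absolute moment stays bounded uniformly in $n$ as $\epsilon_n'\to 0$, so that Berry-Ess\'{e}en applies. Because the dispersion~\eqref{eqn:disp_gauss} is the constant $\tfrac12$, the upper and lower tails contribute the identical exponent, matching the symmetry already visible in~\eqref{eqn:ihara}; and as in Theorem~\ref{thm:mdp_gauss}, no method-of-types argument is invoked, so only~\eqref{eqn:seq2}---without the extra $\log n$ factor of~\eqref{eqn:seq}---is needed.
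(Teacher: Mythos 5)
Your proposal is correct and matches the paper's intended argument: the paper proves Theorem~\ref{thm:mdp_less} only by remarking that it is ``similar to Theorem~\ref{thm:mdp_gauss} and uses ideas in~\cite{Ihara00}'', and your tail-swapped adaptation --- the Verger-Gaugry covering of the shrunken ball together with a downward-tilted change-of-measure lower bound on the lower tail of the $\chi_1^2$ sum for the direct part, and the Ihara--Kubo sphere-hardening step (the centered ball maximizes Gaussian measure among sets of equal volume) followed by a one-line Chernoff bound for the converse --- is exactly that adaptation, with the Berry-Ess\'{e}en/Chebyshev control of $\beta_n$ and the expansion $I(e^{-2\delta})=\delta^2+o(\delta^2)$ handled as in the paper. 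One cosmetic remark: the rate inequalities ``$\ge R(\sigma^2,D)-\epsilon_n$'' in the theorem statement are evidently a typo for ``$\le$'' (otherwise the direct part is trivially satisfiable and the converse false for high-rate codes), and your proof, which fixes $R_n\le R(\sigma^2,D)-\epsilon_n$ in the converse and constructs a code of rate at most $R(\sigma^2,D)-\epsilon_n$ in the direct part, silently and correctly adopts that reading.
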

\begin{proof} Similar to Theorem~\ref{thm:mdp_gauss} and uses ideas in~\cite{Ihara00}. \end{proof}
\section{Conclusion}
In this paper, we analyzed  the MD regime for lossy source coding. In analogy to~\eqref{eqn:channel_coding_md}, we showed for discrete sources that 
\begin{equation}
\lim_{n\to\infty}\frac{1}{n\epsilon_n^2}\log e(f_n,\varphi_n, P, D) =- \frac{1}{2V(P,D)} \label{eqn:conclu}
\end{equation}
and for Gaussian sources the RHS of~\eqref{eqn:conclu} is equal to $-1$ independent of the variance  $\sigma^2$  and the distortion level $D$. As in \cite{ingber11, Kos11, Kos11b}, this reveals that the fundamental nature of the dispersion    in the lossy source coding context. There are at least three avenues for  future research: (i) Can the results   be applied to, for instance, general sources as in \cite{Iriyama05}? (ii) Can similar analysis of the MD setting be applied to lossy source coding problems with {\em side information}, e.g., the Wyner-Ziv problem? (iii) What is the exact relationship between the MD and CLT regimes cf.\ \cite{PV10}?
\section*{Appendix: Proof of Lemma~\ref{lem:cont}}
\begin{proof}
The rate-distortion function is uniformly continuous. Specifically, $R(Q,D)-R(P,D)= O( \|Q-P\|_1 \log \|Q-P\|_1)$ \cite{Pal08}. Also,  $\min_{Q\in\calP_n(\calX) } \|Q-P\|_1\le |\calX|/n$ for any $P\in\calP(\calX)$~\cite[Lemma~2.1.2]{Dembo} so $\min_{Q\in\calP_n(\calX)}  R(Q,D)-R(P,D)  = O(\frac{\log n}{n})$ which is  asymptotically dominated  by  $\epsilon_n'  = \omega((\frac{\log n}{n})^{1/2})$. Thus, there exist  $n$-types in the regular-closed  set  $\{Q\in\calP (\calX):R(Q,D)-R(P,D)\ge \epsilon_n'\}$ for $n$ large.   Let  Marton's exponent be $D(Q_{\mathrm{M}}^{(n)}\,||\, P) =  F(P,R(P,D)+\epsilon_n', D )$.    Then, notice that
\begin{align}
\frac{D(Q^{(n)}\, ||\, P)}{D(Q_{\mathrm{M}}^{(n)}\, ||\, P)} = \frac{D(Q^{(n)}\, ||\, P)- D(Q_{\mathrm{M}}^{(n)}\, ||\, P)}{D(Q_{\mathrm{M}}^{(n)}\, ||\, P)}+1.  \label{eqn:fra}
\end{align}
The numerator of the first term  on the RHS in~\eqref{eqn:fra} is $O(\frac{1}{n})$ because $|D(Q^{(n)} \, ||\,  P) -D(Q_{\mathrm{M}}^{(n)} \,||\, P)| = O(\|Q^{(n)}-Q_{\mathrm{M}}^{(n)}\|_1)$ and $\|Q^{(n)}-Q_{\mathrm{M}}^{(n)}\|_1=O(\frac{1}{n})$. From Lemma~\ref{lemma:dev}, the denominator (Marton's exponent) scales as $\epsilon_n'^2/(2V(P,D))=\omega(\frac{\log n}{n})$. Thus, the first term in~\eqref{eqn:fra}  tends to zero and the ratio of the divergence in~\eqref{eqn:first_term} and Marton's exponent tends to one.% as desired.
\end{proof}\vspace{-.05in}
\bibliographystyle{IEEEtran}
\bibliography{isitbib}

% Generated by IEEEtran.bst, version: 1.13 (2008/09/30)
\begin{thebibliography}{10}
\providecommand{\url}[1]{#1}
\csname url@samestyle\endcsname
\providecommand{\newblock}{\relax}
\providecommand{\bibinfo}[2]{#2}
\providecommand{\BIBentrySTDinterwordspacing}{\spaceskip=0pt\relax}
\providecommand{\BIBentryALTinterwordstretchfactor}{4}
\providecommand{\BIBentryALTinterwordspacing}{\spaceskip=\fontdimen2\font plus
\BIBentryALTinterwordstretchfactor\fontdimen3\font minus
  \fontdimen4\font\relax}
\providecommand{\BIBforeignlanguage}[2]{{%
\expandafter\ifx\csname l@#1\endcsname\relax
\typeout{** WARNING: IEEEtran.bst: No hyphenation pattern has been}%
\typeout{** loaded for the language `#1'. Using the pattern for}%
\typeout{** the default language instead.}%
\else
\language=\csname l@#1\endcsname
\fi
#2}}
\providecommand{\BIBdecl}{\relax}
\BIBdecl

\bibitem{Sha59}
C.~E. Shannon, ``Coding theorems for a discrete source with a fidelity
  criterion,'' \emph{IRE Int. Conv. Rec.}, vol.~7, pp. 142–--163, 1959.

\bibitem{Marton74}
K.~Marton, ``Error exponent for source coding with a fidelity criterion,''
  \emph{IEEE Trans. on Inf. Th.}, vol.~20, no.~2, pp. 197–--199, Mar 1974.

\bibitem{Ihara00}
S.~Ihara and M.~Kubo, ``Error exponent of coding for memoryless gaussian
  sources with a fidelity criterion,'' \emph{IEICE Transactions}, vol. 83-A,
  no.~10, pp. 1891--–1897, 2000.

\bibitem{Iriyama05}
K.~Iriyama, ``Probability of error for the fixed-length lossy source coding of
  general sources,'' \emph{IEEE Trans. on Inf. Th.}, vol.~51, no.~4, pp.
  1498–--1507, Apr 2005.

\bibitem{PPV10}
Y.~Polyanskiy, H.~V. Poor, and S.~Verd\'{u}, ``Channel coding in the finite
  blocklength regime,'' \emph{IEEE Trans. on Inf. Th.}, pp. 2307 -- 59, May
  2010.

\bibitem{Hayashi09}
M.~Hayashi, ``Information spectrum approach to second-order coding rate in
  channel coding,'' \emph{IEEE Trans. on Inf. Th.}, pp. 4947 -- 66, Nov 2009.

\bibitem{TK12}
V.~Y.~F. Tan and O.~Kosut, ``On the dispersions of three network information
  theory problems,'' \emph{arXiv:1201.3901}, Feb 2012, [Online].

\bibitem{ingber11}
A.~Ingber and Y.~Kochman, ``The dispersion of lossy source coding,'' in
  \emph{Data Compression Conference (DCC)}, 2011.

\bibitem{Kos11}
V.~Kostina and S.~Verd\'{u}, ``Fixed-length lossy compression in the finite
  blocklength regime: Discrete memoryless sources,'' in \emph{Int. Symp. Inf.
  Th.}, 2011.

\bibitem{Kos11b}
------, ``Fixed-length lossy compression in the finite blocklength regime:
  {Gaussian} source,'' in \emph{Information Theory Workshop}, 2011.

\bibitem{Dembo}
A.~Dembo and O.~Zeitouni, \emph{Large Deviations Techniques and Applications},
  2nd~ed.\hskip 1em plus 0.5em minus 0.4em\relax Springer, 1998.

\bibitem{altug10}
Y.~Altug and A.~B. Wagner, ``Moderate deviation analysis of channel coding:
  Discrete memoryless case,'' in \emph{Int. Symp. Inf. Th.}, 2010.

\bibitem{PV10}
Y.~Polyanskiy and S.~Verd\'{u}, ``Channel dispersion and moderate deviations
  limits for memoryless channels,'' in \emph{Allerton Conference}, 2010.

\bibitem{Sason11}
I.~Sason, ``{On Refined Versions of the Azuma-Hoeffding Inequality with
  Applications in Information Theory},'' \emph{arXiv:1111.1977}, Nov 2011.

\bibitem{He09}
D.-K. He, L.~A. {Lastras-Monta\~{n}o}, E.-H. Yang, A.~Jagmohan, and J.~Chen,
  ``On the redundancy of {Slepian-Wolf} coding,'' \emph{IEEE Trans. on Inf.
  Th.}, vol.~55, no.~12, pp. 5607–--27, Dec 2009.

\bibitem{Csi97}
I.~Csisz\'ar and J.~Korner, \emph{Information Theory: Coding Theorems for
  Discrete Memoryless Systems}.\hskip 1em plus 0.5em minus 0.4em\relax
  Akademiai Kiado, 1997.

\bibitem{Tan11_IT}
V.~Y.~F. Tan, A.~Anandkumar, L.~Tong, and A.~S. Willsky, ``A large-deviation
  analysis for the maximum likelihood learning of {M}arkov tree structures,''
  \emph{IEEE Trans.\ on Inf.\ Th.}, vol.~57, no.~3, pp. 1714--35, Mar 2011.

\bibitem{Yu93}
B.~Yu and T.~P. Speed, ``A rate of convergence result for a universal
  d-semifaithful code,'' \emph{IEEE Trans. on Inf. Th.}, vol.~39, no.~3, pp.
  813–--820, Mar 1997.

\bibitem{Weiss03}
T.~Weissman, E.~Ordentlich, G.~Seroussi, S.~Verdu, and M.~L. Weinberger,
  ``{Inequalities for the $l_1$ deviation of the empirical distribution},''
  Hewlett-Packard Labs, Tech. Rep., 2003.

\bibitem{Zhang97}
Z.~Zhang, E.-H. Yang, and V.~K. Wei, ``The redundancy of source coding with a
  fidelity criterion: Known statistics,'' \emph{IEEE Trans. on Inf. Th.},
  vol.~43, no.~1, pp. 71–--91, Jan 1997.

\bibitem{VG05}
J.~L. Verger-Gaugry, ``{Covering a ball with smaller equal balls in
  $\bbR^n$},'' \emph{Disc.\ and Comp.\ Geom.}, vol.~33, no.~1, pp. 143–--155,
  2005.

\bibitem{Pal08}
H.~Palaiyanur and A.~Sahai, ``On the uniform continuity of the rate-distortion
  function,'' in \emph{Int. Symp. Inf. Th.}, 2008.

\end{thebibliography}
\end{document}